\title{\LARGE \bf
Safety Embedded Adaptive Control Using Barrier States
}
\author{Maitham F. AL-Sunni \quad \quad Hassan Almubarak \quad \quad John M. Dolan
\thanks{M. F. AL-Sunni is with the Electrical and Computer Engineering Department, Carnegie Mellon University, Pittsburgh, PA,
USA. Email: \href{mailto:malsunni@andrew.cmu.edu}{\tt malsunni@andrew.cmu.edu}}%
\thanks{H. Almubarak is with the School of Electrical and Computer Engineering, Georgia Institute of Technology, Atlanta, GA, USA. Email: \href{mailto:halmubarak@gatech.edu}{\tt halmubarak@gatech.edu}}%
\thanks{J. M. Dolan is with the Robotics Institute, Carnegie Mellon University, Pittsburgh, PA, USA. Email: \href{mailto:jdolan@andrew.cmu.edu}{\tt jdolan@andrew.cmu.edu}}
  \thanks{This work has been accepted for publication in the
  \emph{Proceedings of the 2025 American Control Conference (ACC),
  Denver, Colorado, USA.}}
}
\newtheorem{theorem}{Theorem}
\newtheorem{lemma}{Lemma}
\theoremstyle{definition}
\newtheorem{definition}{Definition}
\newtheorem{remark}{Remark}
\DeclareSIUnit{\rad}{rad}
\begin{document}

\maketitle

\thispagestyle{empty}
\pagestyle{empty}






\begin{abstract}
In this work, we explore the application of barrier states (BaS) in the realm of safe nonlinear adaptive control. Our proposed framework derives barrier states for systems with parametric uncertainty, which are augmented into the uncertain dynamical model. We employ an adaptive nonlinear control strategy based on a control Lyapunov functions approach to design a stabilizing controller for the augmented system. The developed theory shows that the controller ensures safe control actions for the original system while meeting specified performance objectives. We validate the effectiveness of our approach through simulations on diverse systems, including a planar quadrotor subject to unknown drag forces and an adaptive cruise control system, for which we provide comparisons with existing methodologies.
\end{abstract}

\section{Introduction}
Safe control methods have increasingly gained attention in recent research due to their importance in ensuring system reliability. Many of these methods rely on the notion of set invariance and detailed system models to maintain safety. However, these model-based approaches often require highly accurate system representations, which can be difficult to achieve. Maintaining a precise model is challenging as physical systems vary over time due to factors like wear, environmental changes, and operational adjustments. Consequently, models may become outdated, leading to errors and uncertainties that can degrade the performance of traditional control strategies and jeopardize safety.

To address these challenges, this paper presents a framework that unifies adaptive control techniques with barrier states and safety‐embedded systems. In our approach, barrier states directly encode safety constraints into an augmented system, making the enforcement of safe operation an intrinsic part of the control design. An adaptation law compensates for uncertain parameters, ensuring that the system meets performance objectives while maintaining bounded barrier states, which in turn ensures safety. Specifically, we develop Barrier States Embedded Adaptive Control Lyapunov Functions (BaS‐aCLFs) to design the adaptation law, thereby stabilizing the augmented system under parameter uncertainty while enabling safe, multi‐objective control.

\begin{figure}
    \centering
    \includegraphics[trim=0 23 0 0, clip, width=1\linewidth]{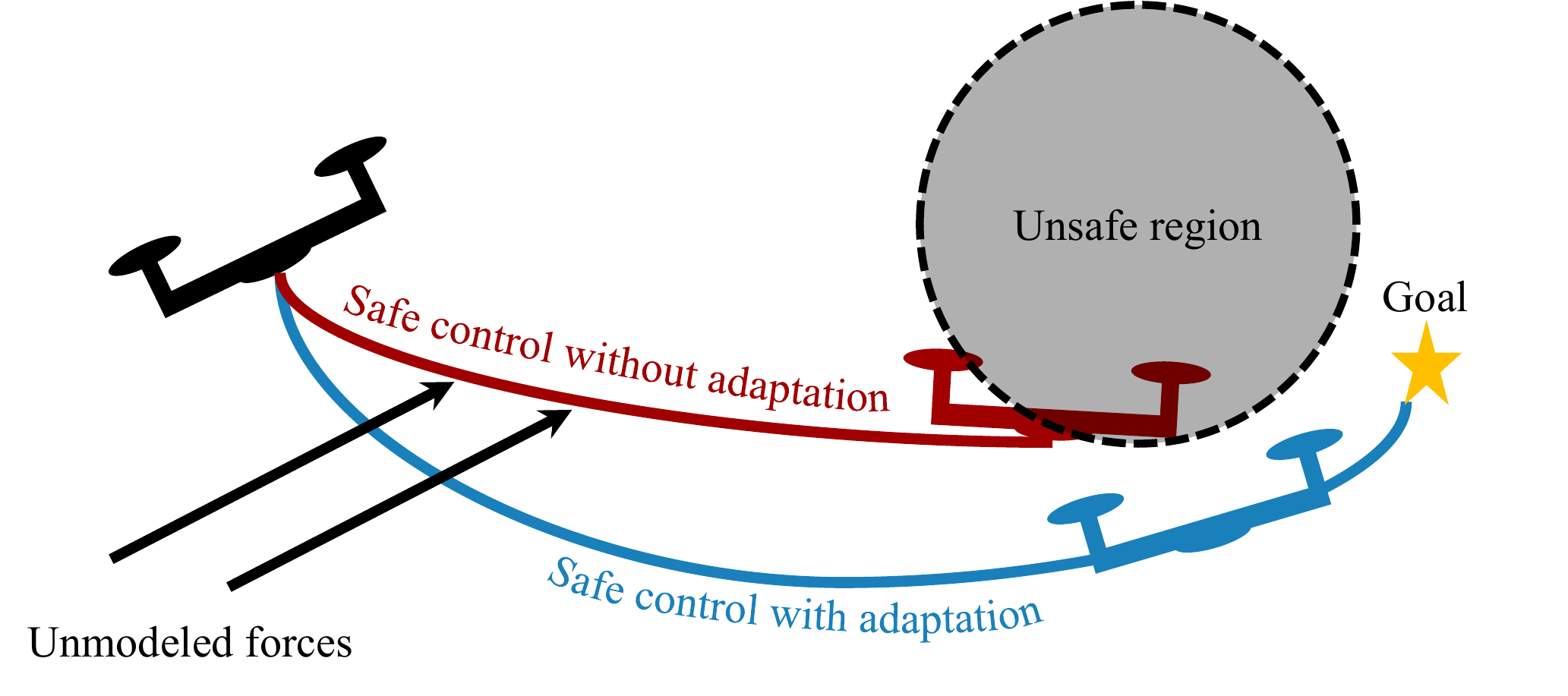}
    \caption{An illustration of our framework where the blue quadrotor can navigate safely even with unmodeled external forces affecting it. Plain safe control without adaptation (red quadrotor) cannot handle the scenario because it does not consider the presence of the unmodeled external forces.}
    \label{fig:quad_illistration}
\vspace{-7mm}
\end{figure}

\subsection{Related Work}
Control barrier functions (CBFs) \cite{wieland2007constructive,ames2016CBF-forSaferyCritControl,romdlony2016stabilization}, have proven to be a viable safe control approach. Safe controls through CBFs can be effectively achieved by solving a quadratic program that filters out unsafe nominal controls \cite{ames2016CBF-forSaferyCritControl}. CBF approaches continue to evolve, with a few open challenges remaining. These include synthesizing valid CBFs to be used with the CBF-QP filters \cite{clark2022semi, zhao2023safety,liu2024synthesis}, which is a tedious task for complex constraints; designing parameters that effectively balance performance and safety; CBFs for high relative degree constraints \cite{ames2016CBF-forSaferyCritControl}, which also make parameter selection more complicated; and handling control input limits \cite{liu2023safe}.

Barrier States (BaS) \cite{Almubarak2021SafetyEC} offer a promising alternative for safe control, addressing key limitations of Control Barrier Functions (CBFs). Unlike CBFs, BaS directly encode safety constraints, eliminating the need to synthesize a valid barrier function. By reformulating the system as a safety embedded system, BaS inherently regulate the barrier function’s dynamics, reducing performance-safety trade-offs. Moreover, BaS are agnostic to constraint relative degree, avoiding the complexities of high-relative-degree barriers \cite{almubarak2021safeddp,almubarak2023barrier}. While this approach increases model nonlinearity and state dimension, established nonlinear control techniques effectively mitigate these challenges \cite{almubarak2021safeddp,almubarak2023barrier}.

Despite advancements, both CBF and BaS methods require full knowledge of system dynamics, limiting their applicability to systems with unknown or uncertain parameters. This challenge has driven research on integrating barrier methods with adaptive control \cite{taylor2020adaptive,fan2020bayesian,lopez2020robust,konig2021safe,lopez2023unmatched,aoun2023l1}. Adaptive CBFs (aCBFs) \cite{taylor2020adaptive}, inspired by Adaptive Control Lyapunov Functions (aCLFs) \cite{krstic1995control}, introduced \textit{adaptive safety} to extend safe control to uncertain systems. However, aCBFs can be overly conservative, as they require the safety function’s rate of change to remain strictly positive, restricting trajectories from reaching the safe set’s boundary \cite{lopez2020robust}. To address this, robust adaptive CBFs (raCBFs) refine the safe set definition, allowing states to approach its boundary \cite{lopez2020robust}.

It is important to note that the open challenges previously discussed for CBFs—such as parameter tuning,  handling higher relative degrees, and control input limits—also apply to aCBFs and raCBFs, given their conceptual roots in the CBF framework. Additionally, to develop a distinct adaptation law based on the CBF approach, the safety function must be parameterized so that the adaptation gain influences the CBF condition. Moreover, raCBFs assume exact knowledge of the bounds of the unknown parameters, which may be a strong assumption for some applications. As for barrier states, the only effort in the literature towards safe adaptive control was in \cite{aoun2023l1}, where an $L_1$ filtering-based adaptive control method was developed, but was limited to linear systems with disturbances. 

\subsection{Contribution and Organization}
This paper seeks to extend BaS to systems with parametric uncertainty, providing a framework for safe adaptive control that enables multi-objective control. To the best of our knowledge, this is the first work to adapt BaS to handle systems with unknown parameters within an adaptive control framework, presenting a novel approach that addresses the discussed limitations while maintaining safety. Using barrier states, we derive a parameterized safety embedded system with characteristics similar to those of the original system. This allows us to use legacy adaptive nonlinear control methods to simultaneously address safety and performance objectives despite the uncertainty in the system's model.

The paper is organized as follows. \autoref{sec: prelim bas and multi obj control} gives the preliminaries of barrier states. \autoref{sec:problem_statement} presents our problem statement. \autoref{sec:main} provides the main result of the paper in which we show how barrier states integrate the concepts of safety and adaptive control into a unified framework that facilitates multi-objective adaptive control. Finally, in \autoref{sec:simulations}, we verify our framework via numerical simulations with three different systems: an inverted pendulum system, an adaptive cruise control (ACC) system, and a planar quadrotor system.

\section{Preliminaries: Barrier States for Multi-objective Control} \label{sec: prelim bas and multi obj control}
Embedded barrier states originated as a model-based method that enables multi-objective control for safety-critical systems. The concept suggests constructing a \textit{safety state} of the system that can be controlled along with the other states. Essentially, barrier states (BaS) establish barriers within the state space of a new system, directing the development of a multi-objective feedback control law towards the realm of safe controls. BaS are augmented into the model of the safety-critical dynamical system to produce an augmented system, referred to as the safety embedded system, that achieves safety if and only if its trajectories are bounded. For instance, in stabilization scenarios, safety is assured when the equilibrium point of interest in the augmented system is asymptotically stabilized. Consequently, with the integration of safety and stability, designing a controller that stabilizes the augmented model also ensures a safe control solution for the original system. This integration means that the safety characteristics are inherent in the system's stability, as the feedback controller depends on both the system’s states and the barrier states—hence the term \textit{safety embedded control}.

Consider the system
\begin{align} \label{eq:bas_prelim_system}
    \dot{x} = f(x,u),
\end{align}
where $x \in \mathcal{X} \subset \mathbb{R}^n$ is the state, $u \in \mathcal{U} \subset \mathbb{R}^m$ is the control input, $f:\mathcal{X} \times \mathcal{U} \to \mathcal{X}$ is continuously differentiable, and without loss of generality, we assume that $f(0,0) = 0$. 
\begin{definition} \label{def:bf}
    The function $\pmb{B} : \mathbb{R} \to \mathbb{R}$ is a barrier function if it is smooth on $(0, \infty)$ and $\pmb{B}(\eta) \xrightarrow[]{\eta \rightarrow 0} \infty$. Examples for barrier functions are: the inverse barrier function $\pmb{B}(\eta) = \frac{1}{\eta}$ and the  logarithmic barrier function $\pmb{B}(\eta) = -\text{log}\left(\frac{\eta}{1+\eta} \right)$.
\end{definition}

The system \eqref{eq:bas_prelim_system} is subject to the safe set $\mathcal{S}$ defined by the safety function $h(x)$. Namely, the safety set is defined as ${\mathcal{S}} \triangleq \left\{x \in \mathcal{X} | h(x) > 0  \right\}$ with $\partial{\mathcal{S}} \triangleq \left\{x \in \mathcal{X} | h(x) = 0 \right\}$ being the boundary set of $\mathcal{S}$. We define $\beta(x)=\pmb{B}(h(x))$, where $\pmb{B}$ is any valid barrier function. Note that $\beta(x) \to \infty$ if and only if $h(x) \to 0$ (approaching unsafe regions) \cite{Almubarak2021SafetyEC}. The barrier function's time derivative\footnote{For notational convenience we use $\pmb{B}'(\eta)$ to denote $\frac{d\pmb{B}}{d\eta}(\eta)$.} is given by $\dot{\beta}(x) = \pmb{B}'\big(\pmb{B}^{-1}(\beta) \big) L_{f(x,u)}h(x)$. Then, the \textit{barrier state}, denoted as $z$, is defined by the state equation
\begin{equation} \label{eq: deterministic BaS z}
  \dot{z} = f^z := \pmb{B}'\big(\pmb{B}^{-1}(z+\beta^{0}) \big) L_{f(x,u)}h(x) - \gamma \big(z+\beta^{0} - \beta(x) \big),
\end{equation}
where $\gamma \in \mathbb{R}^+_0$, $\beta^{0} =\beta(0)$ and the shift by $\beta^{0}$ ensures that $z=0$ is an equilibrium state. It is important to note that when the initial condition $z(0)=\beta\big(x(0)\big) - \beta^0$, we have the solution $z=\beta(x) - \beta^0$ (the reader may refer to \cite{almubarak2023barrier} for detailed derivations of the barrier states and its theory). It must be noted that the BaS $z$ is bounded if and only if the barrier function $\beta(x)$ is bounded \cite[Proposition~2]{Almubarak2021SafetyEC}, which implies that ensuring the boundedness of the BaS guarantees safety.

For a $\mathcal{Q}$ number of constraints, one can design a single barrier state by constructing an aggregated safety function $h$ such that $\frac{1}{h} = \sum_{i=1}^{\mathcal{Q}} \frac{1}{h_i}$ or use
multiple barrier states \cite{almubarak2023barrier,Almubarak2021SafetyEC,almubarak2021safeddp}. For generality, let $z=[z_1, \dots, z_q]^{\top} \in \mathcal{Z} \subseteq \mathbb{R}^q$ where $q$ is the number of barrier states. 

The barrier state equation \eqref{eq: deterministic BaS z} is augmented to the model of the system resulting in the \textit{safety embedded system}
\begin{equation}
\label{eq:safety_augmented_prelim}
\dot{\bar{x}}= \bar{f}(\bar{x}, u),
\end{equation}
where $\bar{x}=\begin{bmatrix} x \\ z \end{bmatrix} \in \bar{\mathcal{X}} \subset \mathcal{X} \times \mathcal{Z}$, $\bar{f}=\begin{bmatrix} f \\ f ^z\end{bmatrix} \in \mathcal{\bar{X}} \times \mathcal{U} \to \mathcal{\bar{X}}$ with $\bar{f}(0)=0$. The safety embedded system is continuously differentiable and its origin is stabilizable \cite{almubarak2023barrier,Almubarak2021SafetyEC}. Therefore, the safety constraint is \textit{embedded} in the closed-loop system's dynamics, and stabilizing the BaS implies enforcing safety for the safety-critical system \eqref{eq:bas_prelim_system}. Hence, stabilizing the origin of the safety embedded system \eqref{eq:safety_augmented_prelim} entails safely stabilizing the origin of the safety-critical system \eqref{eq:bas_prelim_system}.

\begin{theorem}[\cite{almubarak2023barrier}]
The original control system \eqref{eq:bas_prelim_system} is safely stabilizable at the origin if and only if the safety-embedded control system \eqref{eq:safety_augmented_prelim} is stabilizable at the origin\footnote{Note that safe stabilizability implies that the origin is in the safe set.}.
\label{th:if_org_then_se}
\end{theorem}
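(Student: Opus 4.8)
The plan is to prove the two implications separately, using as the central bridge the invariance relation $z=\beta(x)-\beta^0$ (valid under the initialization $z(0)=\beta(x(0))-\beta^0$) together with the boundedness equivalence of \cite[Proposition~2]{Almubarak2021SafetyEC}, namely that $z$ is bounded if and only if $\beta(x)$ is bounded. First I would pin down the two notions: \emph{safe stabilizability} of \eqref{eq:bas_prelim_system} means there is a feedback rendering the origin asymptotically stable while keeping $h(x(t))>0$ along closed-loop trajectories (so in particular $h(0)>0$ and $\beta^0=\beta(0)$ is finite, as the footnote requires), whereas \emph{stabilizability} of \eqref{eq:safety_augmented_prelim} means there is a feedback making $\bar x=0$ asymptotically stable.

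For the necessity direction ($\Rightarrow$), I would take a safe stabilizing feedback $k(x)$ for the original system and apply it to the augmented system. Since the $x$-subsystem of \eqref{eq:safety_augmented_prelim} is unaffected by the augmentation, $x(t)\to 0$ while remaining in $\mathcal{S}$; hence $h(x(t))$ stays bounded away from $0$ and tends to $h(0)$, so $\beta(x(t))$ remains finite and converges to $\beta^0$. Initializing on the invariant manifold then gives $z(t)=\beta(x(t))-\beta^0\to 0$, so that $\bar x(t)\to 0$. For initial data off the manifold I would invoke the attractiveness supplied by the correction term $-\gamma(z+\beta^0-\beta(x))$ (with $\gamma>0$) to drive $z$ back onto the manifold, thereby obtaining asymptotic stability of $\bar x=0$.

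For the sufficiency direction ($\Leftarrow$), I would start from a feedback that asymptotically stabilizes $\bar x=0$ in \eqref{eq:safety_augmented_prelim}. Then $\bar x(t)\to 0$ forces both $x(t)\to 0$, the required stabilization of the original state, and $z(t)\to 0$; in particular $z$ is bounded, so by \cite[Proposition~2]{Almubarak2021SafetyEC} the barrier $\beta(x)$ is bounded along the trajectory. Because $\beta(x)=\pmb{B}(h(x))\to\infty$ exactly as $h(x)\to 0$, boundedness of $\beta$ precludes $h(x(t))\to 0$, so the trajectory never reaches $\partial\mathcal{S}$ and safety is maintained; combined with $x(t)\to 0$ this is precisely safe stabilization of \eqref{eq:bas_prelim_system}.

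I expect the main obstacle to be the necessity direction, specifically the passage from convergence along the invariant manifold to genuine asymptotic stability of the full augmented origin for arbitrary barrier-state initial values. Writing the manifold error $e=z+\beta^0-\beta(x)$, its dynamics take the form $\dot e=\bigl[\pmb{B}'\bigl(\pmb{B}^{-1}(\beta+e)\bigr)-\pmb{B}'(h)\bigr]L_{f(x,u)}h-\gamma e$, which is not simply $\dot e=-\gamma e$ but carries a state-dependent coupling; one must argue that the gain $\gamma$ makes the manifold attractive without destabilizing the converging $x$-trajectory. Care is also needed to upgrade attractivity to Lyapunov stability and to verify throughout that $h(0)>0$, so that $\beta^0$ and the shifted equilibrium $z=0$ are well defined.
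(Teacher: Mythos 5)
Your proof is correct and takes essentially the same route as the source: this paper states the theorem by citation to \cite{almubarak2023barrier} without reproducing a proof, and both that argument and yours rest on the same two pillars — the boundedness equivalence ($z$ bounded if and only if $\beta(x)$ bounded) from \cite[Proposition~2]{Almubarak2021SafetyEC} for the sufficiency direction (which is exactly the bridge this paper reuses in the proof of \autoref{lemma: safety if bas is bounded}), and the invariant-manifold relation $z(t)=\beta(x(t))-\beta^0$ for the necessity direction. The off-manifold subtlety you flag is resolved in the BaS framework by the prescribed initialization $z(0)=\beta(x(0))-\beta^0$, with the $-\gamma\big(z+\beta^0-\beta(x)\big)$ term providing the additional attractivity you describe, so your treatment matches the intended argument.
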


This approach is versatile and can be readily adopted in conjunction with established control methods, enabling its application to general nonlinear systems and constraints in safety-critical multi-objective control and planning. \cite{Almubarak2021SafetyEC,almubarak2021safeddp,song2023safety,aoun2023l1,cho2023model, oshin2024differentiable}. Another major advantage of this technique is that the results of \autoref{th:if_org_then_se} are generally agnostic to the relative degree of the barrier function, which is a critical part in the design of CBFs that can complicate the safe control design problem. Furthermore, the rate of change of the barrier is governed by the control design\footnote{In CBF approaches \cite{ames2019control}, the rate of change is decided by the choice of the class $\mathcal{K}_\infty$ (which is usually called $\alpha$) which can make the performance very conservative in many cases.}, which aims to balance performance and safety objectives, rather than being manually adjusted. In this work, we adopt this concept within an adaptive control framework to ensure safe control. 

\section{Problem Statement} \label{sec:problem_statement}

Consider the dynamical system
\begin{align} \label{eq:org_system}
    \dot{x} =  f(x) + g(x)u + F(x) \theta, 
\end{align}
where $x \in \mathcal{X} \subset \mathbb{R}^n$ is the state, $u \in \mathcal{U} \subset \mathbb{R}^m$ is the control input, $\theta \in \mathbb{R}^p$ is an unknown parameter which can take any value, with $f:\mathcal{X} \to \mathcal{X}$, $g:\mathcal{X} \to \mathbb{R}^{n \times m}$, and $F:\mathcal{X} \to \mathbb{R}^{n \times p}$ being continuously differentiable. It is assumed that the origin is an equilibrium point of the system with $F(0) = 0$. It is desired to ensure that the system stays in the open safe set $\mathcal{S}$, which is a superlevel set defined by the continuously differentiable function $h(x)>0$. It must be noted that the uncertainty of the model means that safety guarantees of model-based safe control methods may be jeopardized and further considerations are needed. Later we will see how such uncertainty in the model affects the embedded BaS method.

In this paper, we aim to extend the concept of multi-objective safe control via barrier states to systems with unknown or uncertain parameters and develop a safe nonlinear adaptive control framework in which safety and other desired performance objectives such as stabilization are achieved. Our goal is to design an adaptive controller of the form
\begin{align} \label{eq:adaptive_controller and adaptation} 
    u = K(x,\hat{\theta}), \quad 
    \dot{\hat{\theta}} = \Gamma \tau(x,\hat{\theta}), 
\end{align}
where $\hat{\theta} \in \mathbb{R}^p$ is an estimate of the parameter $\theta$, $\Gamma \in \mathbb{R}^{p \times p}$ is a positive definite gain, $K: \mathcal{X} \times \mathbb{R}^p \to \mathbb{R}^m$, and $\tau : \mathcal{X} \times \mathbb{R}^p \to \mathbb{R}^p$, such that we achieve safe adaptive control.

\begin{definition}[Safe Adaptive Control]
    The controller \eqref{eq:adaptive_controller and adaptation} is a safe adaptive controller for the system \eqref{eq:org_system}, if the solution of the closed loop system $\dot{x} =  f(x) + g(x)K(x,\hat{\theta}) + F(x) \theta$ with $\dot{\hat{\theta}} = \Gamma \tau(x,\hat{\theta})$ satisfies $x(t) \in {\mathcal{S}}, \forall t \in [0,T]$ for any $\theta$ given that $x(0) \in \mathcal{S}$, where $T$ is the total running time of the system, implying forward invariance of the safe set $\mathcal{S}$.
    \label{def:safe_adaptive_control}
\end{definition}

It is worth noting that previous work in \cite{taylor2020adaptive,lopez2020robust,lopez2023unmatched} proposed using a parameterized safety function, mimicking the aCLF stabilization formulation \cite{krstic1995control}. Specifically, the function $h$ was parameterized by $\theta$, which is not necessarily meaningful in the case of safe control unless $h$ contains uncertain parameters, which was not the case in their development nor was it used in the numerical simulations. Due to the nature of CBF as a control-dependent constraint, CBF-based approaches proposed control frameworks that depend on two adaptation laws. One adaptation law is dedicated to the parameter estimate used by the adaptive CBF methods, while the second adaptation law is dedicated to the parameter estimate used to achieve the control task, which typically calls for a trade-off and manual tuning of the class $\mathcal{K}$ functions. On the other hand, the proposition of embedded barrier states aims towards \textit{embedding} the safety objective within the control objectives. This yields the various advantages discussed earlier. The proposed framework in this paper integrates the concepts of safety via barrier methods and adaptive control through one adaptation law. The benefits of this approach are discussed in the next section. 

\section{Safety Embedded Adaptive Control} \label{sec:main}
In this section, we explore the application of BaS in adaptive control, by embedding the dynamics of the uncertain BaS into the original uncertain system, we create a safety-embedded system that retains the structure of the original formulation \eqref{eq:org_system}. This enables us to tackle the adaptive control problem, where the objective is to devise an adaptation law that stabilizes the states of the embedded system despite the uncertainty, thereby ensuring the safety of the original system. Remarkably, this approach facilitates a direct unification of safety and control in the face of parametric uncertainty.

Building on the barrier states formulation reviewed in the preliminaries, \autoref{sec: prelim bas and multi obj control}, the BaS equation for the system with parametric uncertainty \eqref{eq:org_system} is derived as follows:
the dynamics of $\beta(x) = \pmb{B}(h(x))$ is given by 
\begin{align}    
\dot{\beta}(x) &= \pmb{B}'\big(\pmb{B}^{-1}(\beta) \big) L_{f(x)}h(x) + \pmb{B}'\big(\pmb{B}^{-1}(\beta) \big) L_{g(x)}h(x)u \nonumber \\ &+ \pmb{B}'\big(\pmb{B}^{-1}(\beta) \big) L_{F(x)}h(x)\theta.
\end{align}
Then, the BaS equation is given by
\begin{align} \label{eq: bas with theta star}
    \dot{z} &= \underbrace{\pmb{B}'\left( \pmb{B}^{-1}(z + \beta^{0}) \right) L_{f(x)}h(x) - \gamma \big(z + \beta^{0} - \beta(x)  \big)}_{f_z} \nonumber \\
    &+ \underbrace{\pmb{B}'\left( \pmb{B}^{-1}(z + \beta^{0})\right) L_{g(x)}h(x)}_{g_z} u \nonumber \\ 
    &+ \underbrace{\pmb{B}' \left( \pmb{B}^{-1}(z + \beta^{0})\right) L_{F(x)}h(x)}_{F_z} \theta,
\end{align}
with the initial condition $z(0) = \beta(x(0))-\beta^0$,  where $\beta^0 = \beta(0)$. The BaS equation \eqref{eq: bas with theta star} can be put in the form
\begin{align} \label{eq: bas short form theta star}
    \dot{z} &= f_z + g_z u + F_z \theta.
\end{align}
\begin{lemma} \label{lemma: safety if bas is bounded}
For the original system \eqref{eq:org_system}, given that $x(0) \in \mathcal{S}$, an adaptive controller $u = K({x},\hat{\theta})$ is safe, i.e. $h(x) > 0$ is satisfied and the safe set $\mathcal{S}$ is forward invariant, if and only if  $z(x(t)) < \infty \ \forall t \in [0,T]$.
\end{lemma}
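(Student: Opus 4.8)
The plan is to reduce the safety question to the boundedness of a single scalar quantity by exploiting the exact correspondence between the barrier state $z$ and the composed barrier function $\beta(x) = \pmb{B}(h(x))$. Two ingredients drive the argument: (i) along any closed-loop trajectory of \eqref{eq:org_system} under $u = K(x,\hat\theta)$, the barrier state obeys the identity $z(t) = \beta(x(t)) - \beta^0$; and (ii) by \autoref{def:bf} the barrier function $\pmb{B}$ is finite on $(0,\infty)$ and diverges precisely as its argument tends to $0$, so that $\beta(x)$ is finite if and only if $h(x) > 0$.

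First I would establish (i). Consider the candidate $\tilde z(t) := \beta(x(t)) - \beta^0$. Substituting $\tilde z$ into the right-hand side of \eqref{eq: bas short form theta star}, the relation $\tilde z + \beta^0 = \beta(x)$ gives $\pmb{B}^{-1}(\tilde z + \beta^0) = h(x)$, so every gain $\pmb{B}'\big(\pmb{B}^{-1}(\tilde z + \beta^0)\big)$ coincides with the corresponding gain $\pmb{B}'(h(x))$ appearing in $\dot\beta(x)$, while the correction term $-\gamma(\tilde z + \beta^0 - \beta(x))$ vanishes identically. Hence the right-hand side of \eqref{eq: bas short form theta star} evaluated at $\tilde z$ equals $\dot\beta(x)$, i.e. $\tilde z$ solves the barrier-state ODE along the same trajectory $x(t)$. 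Since $\tilde z(0) = \beta(x(0)) - \beta^0 = z(0)$, uniqueness of solutions—valid because $f,g,F,h$ are continuously differentiable and $\pmb{B}$ is smooth wherever $h > 0$—yields $z(t) = \tilde z(t)$ for all $t$. Crucially, the unknown parameter does not obstruct this: the $F_z\theta$ term in \eqref{eq: bas short form theta star} is built from precisely the same $L_{F(x)}h\,\theta$ contribution present in $\dot\beta(x)$, so $\theta$ cancels in the comparison regardless of its value, recovering the deterministic identity of the preliminaries in the parametric-uncertainty setting.

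With (i) and (ii) in hand the equivalence follows. Since $\beta^0 = \beta(0)$ is a finite constant, $z(t)$ is finite if and only if $\beta(x(t)) = \pmb{B}(h(x(t)))$ is finite, which by \autoref{def:bf} holds if and only if $h(x(t)) > 0$, i.e. $x(t) \in \mathcal{S}$. For necessity, if the controller is safe then $h(x(t)) > 0$ for all $t \in [0,T]$, so $\beta(x(t))$ and hence $z(t)$ remain finite. For sufficiency, suppose $z(t) < \infty$ on $[0,T]$ yet, seeking a contradiction, the trajectory leaves $\mathcal{S}$; since $h(x(0)) > 0$ and $t \mapsto h(x(t))$ is continuous, there is a first exit time $t^\star \in (0,T]$ with $h(x(t^\star)) = 0$ and $h(x(t)) > 0$ for $t < t^\star$. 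As $t \uparrow t^\star$ we have $h(x(t)) \to 0^+$, so $\beta(x(t)) \to \infty$ and therefore $z(t) = \beta(x(t)) - \beta^0 \to \infty$, contradicting the assumed finiteness of $z$. Hence $h(x(t)) > 0$ for all $t \in [0,T]$ and $\mathcal{S}$ is forward invariant.

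I expect the main obstacle to be step (i)—specifically, making rigorous that the exact $z$–$\beta$ correspondence survives the addition of the unknown parameter, rather than the downstream topology. Everything after (i) is a direct consequence of the blow-up characterization of barrier functions and essentially reduces to the established deterministic result \cite[Proposition~2]{Almubarak2021SafetyEC}; the only care needed there is to read ``$z(t) < \infty$ for all $t$'' as the absence of finite-time escape and to invoke continuity of $h \circ x$ to locate the first boundary crossing.
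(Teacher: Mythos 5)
Your proof is correct, and it follows the same logical skeleton as the paper's: safety $\Leftrightarrow h(x(t))>0 \Leftrightarrow \beta(x(t))<\infty \Leftrightarrow z(t)<\infty$, with \autoref{def:bf} supplying the blow-up characterization. The difference lies in how the crucial $z$--$\beta$ link is justified. The paper simply invokes \cite[Proposition~2]{Almubarak2021SafetyEC} (the BaS is bounded iff $\beta$ is bounded) in both directions and is done in a few lines; you instead prove the exact trajectory identity $z(t)=\beta(x(t))-\beta^0$ by substituting the candidate into \eqref{eq: bas short form theta star}, observing that the $\theta$-dependent term $F_z\theta$ matches the $L_{F(x)}h\,\theta$ contribution in $\dot\beta$ so the uncertainty cancels, and appealing to uniqueness of solutions under the matched initialization $z(0)=\beta(x(0))-\beta^0$. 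Your route buys two things: it is self-contained, and it explicitly verifies that the deterministic correspondence survives the parametric uncertainty --- a point the paper glosses over, since the proposition it cites was established for the system without the $F(x)\theta$ term (though the extension is immediate because the closed-loop uncertain system is just another nonlinear system). What the paper's citation-based route buys in exchange is generality: the quoted boundedness equivalence holds even when $z(0)\neq\beta(x(0))-\beta^0$, thanks to the $-\gamma\big(z+\beta^0-\beta(x)\big)$ leak term, whereas your identity argument genuinely needs the matched initial condition (which the paper does impose after \eqref{eq: bas with theta star}, so this is not a gap here). Your first-exit-time contradiction argument and your explicit reading of ``$z(t)<\infty$ for all $t$'' as absence of finite-time escape are also more careful than the paper's treatment of the sufficiency direction, which states the conclusion directly from the cited proposition.
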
 
\begin{proof}
    $\Rightarrow$ Let there exists a control law $u = K({x},\hat{\theta})$ such that $\mathcal{S}$ is forward invariant for the closed-loop system $\dot{{x}}= {f}({x}) + {g}({x})K({x},\hat{\theta}) + {F}({x})\theta$ with the solution $x(t) \in \mathcal{S} \ \forall t \in [0,T]$. By definition of the safe set, we have $h(x) > 0 \ \forall t \in [0,T]$ Hence, by \autoref{def:bf} and the properties of the barrier function $\beta$, $\beta(x(t)) < \infty$. Hence, by \cite[Proposition~2]{Almubarak2021SafetyEC}, $z(x(t)) < \infty\ \ \forall t \in [0,T]$. \\ 
    $\Leftarrow$ Let $h\left(x(0)\right)>0 \Rightarrow \beta\left(x(0)\right) < \infty$ and suppose that $z < \infty \ \forall t \in [0,T]$ under some continuous controller $u = K({x},\hat{\theta})$. By \cite[Proposition~2]{Almubarak2021SafetyEC}, we have $\beta\left(x(t)\right) < \infty \ \forall t \in [0,T]$. Then, by \autoref{def:bf} and the definition of $\beta$, $h\left(x(t)\right)>0 \ \forall t \in [0,T]$. As a consequence, $\mathcal{S}$ is forward invariant with respect to the closed loop system $\dot{{x}}= {f}({x}) + {g}({x})K({x},\hat{\theta}) + {F}({x})\theta$ and hence $u$ is safe.
\end{proof}

As ensuring boundedness of the BaS $z$ guarantees safety of the system, we aim to design an adaptive controller that renders the BaS in \eqref{eq: bas short form theta star} bounded regardless of the uncertain parameter $\theta$. Consequently, the idea is to develop an adaptive controller of the form \eqref{eq:adaptive_controller and adaptation} that stabilizes the states of the system along with the barrier states. More specifically, given that \eqref{eq: bas short form theta star} has an identical form to \eqref{eq:org_system}, we work with the safety embedded system with $q$ barrier states
\begin{equation}
\label{eq:safety_augmented_sys_ad_saf}
\dot{\bar{x}}= \bar{f}(\bar{x}) + \bar{g}(\bar{x})u + \bar{F}(\bar{x})\theta,
\end{equation}
where $\bar{x}=\begin{bmatrix} x \\ z \end{bmatrix} \in \bar{\mathcal{X}} \subset \mathcal{X} \times \mathcal{Z} $, $\bar{f}=\begin{bmatrix} f \\ f_z\end{bmatrix}: \bar{\mathcal{X}} \to \bar{\mathcal{X}}$, $\bar{g}=\begin{bmatrix} g \\ g_z\end{bmatrix}: \bar{\mathcal{X}} \to \mathbb{R}^{(n+q)\times m}$, and $\bar{F}=\begin{bmatrix} F \\ F_z\end{bmatrix}: \bar{\mathcal{X}} \to \mathbb{R}^{(n+q)\times p}$.

Now, the objective is to design a controller that adaptively stabilizes the safety embedded system \eqref{eq:safety_augmented_sys_ad_saf} regardless of $\theta$.


The closed-loop safety embedded system under the adaptive controller \eqref{eq:adaptive_controller and adaptation} can be written as
\begin{align}
    \begin{bmatrix}
    \dot{\bar{x}} \\ 
    \dot{\hat{\theta}}
    \end{bmatrix}
    =
    \begin{bmatrix}
    \bar{f}(\bar{x}) + \bar{g}(\bar{x})K(\bar{x},\hat{\theta})+ \bar{F}(\bar{x}) \theta \\
    \Gamma \tau(\bar{x},\hat{\theta})
    \end{bmatrix}.
    \label{eq:closed_loop_ac}
\end{align}

The control law $u=K(\bar{x},\hat{\theta})$ needs to be determined to deal with the parametric uncertainty in \eqref{eq:safety_augmented_sys_ad_saf}. In the adaptive control literature, several techniques have been proposed. This includes adaptive control Lyapunov functions (aCLFs) \cite{krstic1995control}, universal CLFs \cite{lopez2021universal}, differentiable
robust model predictive control \cite{oshin2024differentiable}, and tube model predictive control \cite{mayne2011tube}. 

In this work, we get inspiration from (aCLF) to formulate Barrier States Embedded Adaptive Control Lyapunov Functions (BaS-aCLFs).

\begin{definition} \label{def:bas_aclf}
Let there exist a positive definite symmetric matrix $\Gamma$ such that for each $\theta \in \mathbb{R}^p$, the system 
    \begin{equation} \label{eq: modified system}
        \dot{\bar{x}} = \bar{f}(\bar{x}) + \bar{g}(\bar{x})u + \bar{F}(\bar{x}) \vartheta(\bar{x}, \theta), 
    \end{equation}
where $\vartheta(\bar{x}, \theta) = \left(\theta + \Gamma \left(\frac{\partial V_{\text{Ba}}}{\partial \theta}\right)^{\top} \right)$ possesses the Control Lyapunov Function (CLF) $V_{\text{Ba}}(\bar{x}, \theta)$. That is,
\begin{align*} 
\inf_{u \in \mathcal{U}}  \frac{\partial V_{\text{Ba}}}{\partial \bar{x}} \Bigg[\bar{f}(\bar{x}) + \bar{g}(\bar{x})u + \bar{F}(\bar{x}) \vartheta(\bar{x}, \theta)  \Bigg] \leq - \alpha_1(||\bar{x}||,\theta)
\end{align*}
and $\alpha_2(||\bar{x}||,\theta) \leq V_{\text{Ba}}(\bar{x}, \theta) \leq \alpha_3(||\bar{x}||,\theta)$, where $\alpha_1$, $\alpha_2$, and $\alpha_3$ are class $\mathcal{K}_\infty$ functions. Then, $V_{\text{Ba}}$ is called a BaS-aCLF for \eqref{eq:safety_augmented_sys_ad_saf}. 
\end{definition}

\begin{lemma}
    \label{lemma:if_aclf_then_safe}
    There exists an adaptive controller  $u = K(\bar{x}, \hat{\theta})$ that adaptively stabilizes the origin of \eqref{eq: modified system} for any $\theta \in \mathbb{R}^p$ if there exists a BaS-aCLF for \eqref{eq:safety_augmented_sys_ad_saf}. Furthermore, $u = K(\bar{x}, \hat{\theta})$ renders the origin of the safety embedded system \eqref{eq:safety_augmented_sys_ad_saf}  adaptively asymptotically stable. 
\end{lemma}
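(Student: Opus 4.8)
The plan is to transplant the classical adaptive control Lyapunov function (aCLF) argument of \cite{krstic1995control} into the safety-embedded setting, using the BaS-aCLF $V_{\text{Ba}}$ of \autoref{def:bas_aclf} as the stability certificate for the closed loop \eqref{eq:closed_loop_ac}. First I would introduce the parameter estimation error $\tilde{\theta} = \theta - \hat{\theta}$ and form the composite candidate
\[
V_a(\bar{x},\hat{\theta}) = V_{\text{Ba}}(\bar{x},\hat{\theta}) + \tfrac{1}{2}\tilde{\theta}^{\top}\Gamma^{-1}\tilde{\theta}.
\]
Because $\Gamma \succ 0$ and $V_{\text{Ba}}$ is sandwiched between class-$\mathcal{K}_\infty$ bounds, $V_a$ is positive definite and radially unbounded in the combined coordinates $(\bar{x},\tilde{\theta})$, so it is a legitimate Lyapunov candidate for the augmented closed loop.

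Next I would differentiate $V_a$ along \eqref{eq:closed_loop_ac}, substituting $\theta = \hat{\theta}+\tilde{\theta}$ into $\bar{F}(\bar{x})\theta$ and using $\dot{\tilde{\theta}} = -\dot{\hat{\theta}}$ (since $\theta$ is constant). This produces the estimate-only term $\frac{\partial V_{\text{Ba}}}{\partial \bar{x}}[\bar{f}+\bar{g}u+\bar{F}\hat{\theta}]$ together with the two $\tilde{\theta}$-dependent cross terms $\frac{\partial V_{\text{Ba}}}{\partial \bar{x}}\bar{F}\tilde{\theta}$ and $-\tilde{\theta}^{\top}\Gamma^{-1}\dot{\hat{\theta}}$, plus $\frac{\partial V_{\text{Ba}}}{\partial \hat{\theta}}\dot{\hat{\theta}}$. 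The decisive step is to select the tuning function $\tau(\bar{x},\hat{\theta}) = \big(\frac{\partial V_{\text{Ba}}}{\partial \bar{x}}\bar{F}(\bar{x})\big)^{\top}$ so that the update law $\dot{\hat{\theta}} = \Gamma\tau$ exactly annihilates the two cross terms; this cancellation is precisely what purges the unknown $\theta$ from $\dot{V}_a$ and is the mechanism that makes the whole scheme work.

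After cancellation I would show that the surviving expression collapses onto the modified-system form of \autoref{def:bas_aclf}. Using $\Gamma = \Gamma^{\top}$ and the scalar identity $\frac{\partial V_{\text{Ba}}}{\partial \hat{\theta}}\Gamma\big(\frac{\partial V_{\text{Ba}}}{\partial \bar{x}}\bar{F}\big)^{\top} = \frac{\partial V_{\text{Ba}}}{\partial \bar{x}}\bar{F}\Gamma\big(\frac{\partial V_{\text{Ba}}}{\partial \hat{\theta}}\big)^{\top}$, the feedback contribution of the update law recombines with $\bar{F}\hat{\theta}$ into $\bar{F}\vartheta(\bar{x},\hat{\theta})$, yielding
\[
\dot{V}_a = \frac{\partial V_{\text{Ba}}}{\partial \bar{x}}\Big[\bar{f}(\bar{x}) + \bar{g}(\bar{x})u + \bar{F}(\bar{x})\vartheta(\bar{x},\hat{\theta})\Big].
\]
This is exactly the quantity bounded in \autoref{def:bas_aclf}, evaluated at the current estimate $\hat{\theta}$; since that definition is quantified over \emph{every} $\theta \in \mathbb{R}^p$, the inequality applies at $\hat{\theta}$. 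Hence the CLF condition furnishes a control value, and via a Sontag-type universal formula under the usual small-control property a continuous feedback $u = K(\bar{x},\hat{\theta})$, rendering $\dot{V}_a \leq -\alpha_1(\|\bar{x}\|,\hat{\theta}) \leq 0$.

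Finally I would conclude adaptive asymptotic stability of the origin of \eqref{eq:safety_augmented_sys_ad_saf}. With $V_a$ positive definite and $\dot{V}_a \le 0$, the origin is stable and $\bar{x},\hat{\theta}$ stay bounded; integrating the bound shows $\alpha_1(\|\bar{x}\|,\hat{\theta})$ is integrable, and invoking the LaSalle--Yoshizawa theorem (or Barbalat's lemma after checking uniform continuity of $\dot{V}_a$) gives $\bar{x}(t)\to 0$, with $\hat{\theta}$ converging to a constant not necessarily equal to $\theta$. I expect the main obstacle to lie in the cancellation/recombination bookkeeping—making the transpose and $\Gamma=\Gamma^{\top}$ manipulations rigorous so that $\dot{V}_a$ reduces cleanly to the \autoref{def:bas_aclf} inequality—compounded by the technical care needed because the $\mathcal{K}_\infty$ bounds depend on the time-varying $\hat{\theta}$, which must be controlled so that the LaSalle--Yoshizawa conclusion remains valid.
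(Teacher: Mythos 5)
Your proposal is correct and follows essentially the same route as the paper: the same composite Lyapunov function $V_{\text{Ba}} + \tfrac{1}{2}\tilde{\theta}^{\top}\Gamma^{-1}\tilde{\theta}$, the same choice of tuning function $\tau = \bigl(\frac{\partial V_{\text{Ba}}}{\partial \bar{x}}\bar{F}\bigr)^{\top}$ to cancel the $\tilde{\theta}$-dependent terms, reduction to the BaS-aCLF inequality of \autoref{def:bas_aclf} evaluated at $\hat{\theta}$, and an invariance-principle argument to conclude $\bar{x}(t)\to 0$. Your added touches (Sontag's formula for existence of a continuous $K$, Barbalat/LaSalle--Yoshizawa instead of LaSalle, and the observation that $\hat{\theta}$ need not converge to $\theta$) are consistent with, and in the last case explicitly echoed by, the paper's own remark.
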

\begin{proof}  
    Let $V_{\text{Ba}}$ be a valid BaS-aCLF for the safety embedded system \eqref{eq:safety_augmented_sys_ad_saf}, and consider the candidate composite Lyapunov function for the closed-loop system \eqref{eq:closed_loop_ac}
    \begin{align}
        V(\bar{x},\hat{\theta}) = V_{\text{Ba}}(\bar{x},\hat{\theta}) + \frac{1}{2} \tilde{\theta}^\top \Gamma^{-1} \tilde{\theta},
    \end{align}
    where $\tilde{\theta} = \theta - \hat{\theta}$. Taking its time derivative gives
    {\small \begin{align}
        \dot{V}
        &= \frac{\partial V_{\text{Ba}}}{\partial \bar{x}} \left[\bar{f}(\bar{x}) + \bar{g}(\bar{x})K(\bar{x},\hat{\theta}) + \bar{F}(\bar{x}) \theta  \right]  \\
        & + \frac{\partial V_{\text{Ba}}}{\partial \hat{\theta}}\Gamma\tau(\bar{x},\hat{\theta}) \nonumber - \tilde{\theta}^\top \tau(\bar{x},\hat{\theta}) \nonumber \\
        &= \frac{\partial V_{\text{Ba}}}{\partial \bar{x}} \left[\bar{f}(\bar{x}) + \bar{g}(\bar{x})K(\bar{x},\hat{\theta}) + \bar{F}(\bar{x}) \vartheta(\bar{x},\hat{\theta})  \right] - \tilde{\theta}^\top \tau(\bar{x},\hat{\theta}) \nonumber \\ 
        &+ \frac{\partial V_{\text{Ba}}}{\partial \hat{\theta}}\Gamma\tau(\bar{x},\hat{\theta}) + \frac{\partial V_{\text{Ba}}}{\partial \bar{x}} \bar{F} \tilde{\theta} - \frac{\partial V_{\text{Ba}}}{\partial \hat{\theta}} \Gamma \left( \frac{\partial V_{\text{Ba}}}{\partial \bar{x}} \bar{F} \right)^\top. \nonumber
        \end{align}}        
        By \autoref{def:bas_aclf},
        \begin{align}
                \dot{V} &\leq -\alpha_1(||\bar{x}||,\hat{\theta}) + \tilde{\theta}^\top \psi(\bar{x},\hat{\theta}) - \left(\frac{\partial V_{\text{Ba}}}{\partial \hat{\theta}}(\bar{x},\hat{\theta}) \right) \Gamma \psi(\bar{x},\hat{\theta}), \nonumber
    \end{align}
    where $\psi(\bar{x},\hat{\theta}) = \left(\frac{\partial V_{\text{Ba}}}{\partial \bar{x}}(\bar{x},\hat{\theta})\bar{F}(\bar{x}) \right)^\top - \tau(\bar{x},\hat{\theta})$. For the adaptive controller, let the adaptation function be
    \begin{align}
    \tau(\bar{x},\hat{\theta}) = \left(\frac{\partial V_{\text{Ba}}}{\partial \bar{x}}(\bar{x},\hat{\theta})\bar{F}(\bar{x}) \right)^\top,
    \end{align}
    which results in $\dot{V} \leq -\alpha_1(||\bar{x}||,\hat{\theta})$. Hence, under the adaptive controller $u=K(\bar{x}, \hat{\theta})$ with the adaptation law $\dot{\hat{\theta}}= \Gamma \left(\frac{\partial V_{\text{Ba}}}{\partial \bar{x}}(\bar{x},\hat{\theta})\bar{F}(\bar{x}) \right)^\top$, the origin of the closed-loop system in \eqref{eq:closed_loop_ac}, $\bar{x}=0,\hat{\theta} = \theta$ is globally stable and $\bar{x}(t)\to 0$ by the invariance principle of LaSalle \cite{khalil2002nonlinear}. Furthermore, the safety embedded system \eqref{eq:safety_augmented_sys_ad_saf} is adaptively asymptotically stable for any $\theta \in \mathbb{R}^p$. 
\end{proof}

\begin{remark}
Although $(\bar{x},\hat{\theta})=(0,\theta)$ is a Lyapunov stable equilibrium of the closed-loop system,
it needs not be the unique attractor. Without further restrictive assumptions, $\hat{\theta}(t)$ may converge to 
some value different from $\theta$ while still driving $\bar{x}(t)\to0$.
\end{remark}

\begin{theorem}
    Let $x(0) \in \mathcal{S}$ with $V_{\text{Ba}}$ being a BaS-aCLF for the safety embedded system \eqref{eq:safety_augmented_sys_ad_saf} under the controller $u = K(\bar{x},\hat{\theta})$, $\dot{\hat{\theta}} = \Gamma \tau(\bar{x},\hat{\theta})$ with 
    \begin{align}
         \tau(\bar{x},\hat{\theta}) = \left(\frac{\partial V_{\text{Ba}}}{\partial \bar{x}}(\bar{x},\hat{\theta})\bar{F}(\bar{x}) \right)^\top.
    \end{align}
Then, the original system \eqref{eq:org_system} is adaptively asymptotically stable and safe for any $\theta \in \mathbb{R}^p$. 
\end{theorem}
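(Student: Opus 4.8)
The plan is to stitch together the two lemmas already established: \autoref{lemma:if_aclf_then_safe} delivers the stabilization half of the claim essentially for free, and \autoref{lemma: safety if bas is bounded} converts that stabilization into a safety guarantee. The entire argument hinges on one bridging fact, which I expect to be the only real obstacle: adaptive asymptotic stability of the augmented system \eqref{eq:safety_augmented_sys_ad_saf} must force the barrier-state component $z$ to remain finite for \emph{all} $t\in[0,T]$, not merely in the limit, since \autoref{lemma: safety if bas is bounded} requires boundedness along the whole trajectory.

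First I would invoke \autoref{lemma:if_aclf_then_safe}. Because $V_{\text{Ba}}$ is a BaS-aCLF for \eqref{eq:safety_augmented_sys_ad_saf}, the stated controller $u=K(\bar x,\hat\theta)$ together with the adaptation law $\dot{\hat\theta}=\Gamma\tau(\bar x,\hat\theta)$ renders the origin $\bar x=0$, $\hat\theta=\theta$ adaptively asymptotically stable for every fixed $\theta\in\mathbb{R}^p$; in particular $\bar x(t)\to 0$. Writing $\bar x=[x^\top,z^\top]^\top$, the $x$-component then satisfies $x(t)\to 0$, which is precisely the adaptive asymptotic stabilization of the original system \eqref{eq:org_system} claimed in the theorem. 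This settles the performance part.

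Second, and this is where the safety claim is earned, I would establish that $z(t)<\infty$ for all $t\in[0,T]$. The key is that the composite Lyapunov function $V(\bar x,\hat\theta)=V_{\text{Ba}}(\bar x,\hat\theta)+\tfrac12\tilde\theta^\top\Gamma^{-1}\tilde\theta$ constructed in the proof of \autoref{lemma:if_aclf_then_safe} satisfies $\dot V\le -\alpha_1\le 0$, hence $V(t)\le V(0)$. The hypothesis $x(0)\in\mathcal S$ gives $h(x(0))>0$, so $\beta(x(0))<\infty$ and $z(0)=\beta(x(0))-\beta^0$ is finite, whence $V(0)<\infty$. Monotonicity then bounds both summands separately: $\tilde\theta$ stays bounded, and $\alpha_2(\|\bar x\|,\hat\theta)\le V_{\text{Ba}}(\bar x,\hat\theta)\le V(0)$, so $\|\bar x(t)\|$, and in particular $z(t)$, is uniformly bounded on $[0,T]$, ruling out finite-escape of the barrier state.

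Finally I would apply the ``$\Leftarrow$'' direction of \autoref{lemma: safety if bas is bounded}: the finiteness $z(t)<\infty$ for all $t$, together with $x(0)\in\mathcal S$, yields $h(x(t))>0$ and forward invariance of $\mathcal S$, i.e.\ the controller is safe in the sense of \autoref{def:safe_adaptive_control}. Combining this with the convergence $x(t)\to 0$ shows that \eqref{eq:org_system} is simultaneously adaptively asymptotically stable and safe for any $\theta$. The one technical subtlety I would flag is that the class-$\mathcal K_\infty$ lower bound in \autoref{def:bas_aclf} is evaluated at the time-varying $\hat\theta$; one must therefore first note that $\hat\theta$ remains bounded (from the $\tilde\theta$ term of $V$) before concluding uniform boundedness of $\|\bar x\|$. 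This is the standard adaptive-control argument, and it is exactly what upgrades mere asymptotic convergence into the for-all-$t$ finiteness that the safety lemma demands.
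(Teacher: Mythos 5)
Your proof is correct and follows essentially the same route as the paper's: \autoref{lemma:if_aclf_then_safe} gives adaptive asymptotic stability of the safety embedded system, hence boundedness of the barrier state $z$, and \autoref{lemma: safety if bas is bounded} then converts that boundedness into forward invariance of $\mathcal{S}$. The only difference is that you spell out the step the paper treats as immediate — that the monotone decrease of the composite Lyapunov function, together with boundedness of $\hat{\theta}$, yields finiteness of $z(t)$ for \emph{all} $t\in[0,T]$ rather than only asymptotically — which is a sound and worthwhile piece of added rigor, not a departure from the paper's argument.
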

\begin{proof}
    By \autoref{lemma:if_aclf_then_safe}, the BaS-aCLF ensures that the origin of the safety embedded system is adaptively asymptotically stable ($\bar{x}\rightarrow 0$) for any $\theta \in \mathbb{R}^p$. Hence, the barrier state in \eqref{eq: bas short form theta star} remains bounded. Furthermore, by \autoref{lemma: safety if bas is bounded}, boundedness of $z(t)$ implies that the controller $u = K(\bar{x}, \hat{\theta})$ is a safe adaptive controller ensuring that $x(t) \in \mathcal{S}$ for all $t \in [0, T]$.
\end{proof}
Given a BaS-aCLF, one could implement an optimization-based controller by solving the following quadratic program:
\begin{align}
    {K}(\bar{x}, \hat{\theta}) &= \underset{{u} \in \mathcal{U}}{\arg\min} \quad \frac{1}{2} \|{u}\|^2 \\ \hspace{-5mm}
&\text{s.t.} \quad \frac{\partial V_{\text{Ba}}}{\partial \bar{x}}(\bar{x}, \hat{\theta}) 
\left(\bar{f}(\bar{x}) + \bar{g}(\bar{x}) {u} + \bar{F}(\bar{x}) \vartheta(\bar{x}, \hat{\theta})  \right) \nonumber \\
&\leq - \alpha_1(\|\bar{x}\|, \hat{\theta}). \nonumber
\end{align}
Via KKT conditions \cite{boyd2004convex}, the closed-loop solution of the QP is given by
\begin{align}
        {K}(\bar{x}, \hat{\theta}) = \left\{\begin{array}{lr}
        -\frac{\mu(\bar{x},\hat{\theta})\nu(\bar{x},\hat{\theta})}{\nu^\top(\bar{x},\hat{\theta}) \nu(\bar{x},\hat{\theta})}, &\text{ if } \mu(\bar{x},\hat{\theta}) > 0, \\
        0,  &\text{ if } \mu(\bar{x},\hat{\theta}) \leq 0,
        \end{array}\right.
\end{align}
where $\mu(\bar{x},\hat{\theta}) \triangleq
    \frac{\partial V_{\text{Ba}}}{\partial \bar{x}}(\bar{x}, \hat{\theta}) \left(\bar{f}(\bar{x})  + \bar{F}(\bar{x})\vartheta(\bar{x}, \hat{\theta}) \right) + \alpha_1(\|\bar{x}\|, \hat{\theta})$ and $\nu(\bar{x},\hat{\theta}) \triangleq
    \frac{\partial V_{\text{Ba}}}{\partial \bar{x}}(\bar{x}, \hat{\theta}) \bar{g}(\bar{x})$. This unified approach eliminates the need for multiple adaptation laws and offers a systematic method to integrate safety and performance objectives without the excessive conservatism of traditional approaches. Theoretical guarantees of safety and stability were provided through Lyapunov-based analysis, and the framework was shown to be applicable to general nonlinear systems.

\section{Numerical Simulations} \label{sec:simulations}
We validate our method on three systems: (1) a planar quadrotor with unknown drag forces, (2) an inverted pendulum with unknown viscous damping and gravity, and (3) adaptive cruise control with uncertain rolling friction. Our approach is compared to \textit{vanilla BaS} (which ignores uncertain parameters) and \textit{raCBF} \cite{lopez2020robust}, where applicable (as raCBFs cannot handle high-relative-degree constraints). 

In all experiments, we initialize parameter estimates as $\hat{\theta}(0) = \theta/10 $, use an adaptation gain $\Gamma = I$ (identity matrix), set $\gamma = 1$ for barrier states, and choose the barrier function $\pmb{B}(\eta) = 1/\eta$. The function $V_{\text{Ba}}$ is designed using a tuned Lyapunov function based on an LQR for the nominal safety-embedded system, i.e. without the uncertain parameters.

\subsection{Planar quadrotor}
Consider the quadrotor dynamics given by
\begin{align}
    \begin{bmatrix}
        \Ddot{x} \\ 
        \Ddot{y} \\
        \Ddot{\psi}
    \end{bmatrix}  = 
    \begin{bmatrix}
        \frac{1}{m} (u_1 + u_2) \sin{(\psi)} + d_x \dot{x} \\ 
        \frac{1}{m} (u_1 + u_2) \cos{(\psi)} - g + d_y \dot{y} \\
        \frac{l}{2J} (u_2 - u_1)   \\
    \end{bmatrix},
\end{align}
where ($x,y$) is the location, $\psi$ is the orientation, $m$ is the mass of the quadrotor, $l$ is the distance from the center to the propellers of the quadrotor, $J$ is the moment of inertia, $g$ is the acceleration due to gravity, $d_x$ and $d_y$ are the unknown drag coefficient in the $x$ and $y$ directions, and $u_1$ and $u_2$ denote the right and left thrust forces respectively. The objective is to safely thrust the quadrotor to hover from a random initial condition where the quadrotor might flip. Specifically, we want to safely stabilize the quadrotor at $x=\SI[per-mode = symbol]{0}{\meter}, y=\SI[per-mode = symbol]{1}{\meter}, \psi = \SI[per-mode = symbol]{0}{\rad}$, where the safe set is given by $y \geq \sqrt{0.5} \SI[per-mode = symbol]{}{\ \meter}$. It is important to note that with this system and constraint, we have a high relative degree that prevents the direct use of aCBFs and raCBFs.

For this example, we use the following parameters: $m=\SI[per-mode = symbol]{1}{\kilogram}$, $l=\SI[per-mode = symbol]{0.3}{\meter}$, $g = \SI[per-mode = symbol]{9.81}{\meter\per\square\second}$, $J = 0.2 \cdot m \cdot l^2$, $d_x = \SI[per-mode = symbol]{1}{\second^{-1}}$, and $d_y = \SI[per-mode = symbol]{1}{\second^{-1}}$. \autoref{fig:quadrotor_sim} shows that the proposed controller successfully adapts to overcome the uncertainty in the model while the barrier states embedded controller without adaptation fails to ensure the safety of the quadrotor.

\begin{figure}[ht]
\vspace{-4mm}
    \centering
    \includegraphics[width=\linewidth]{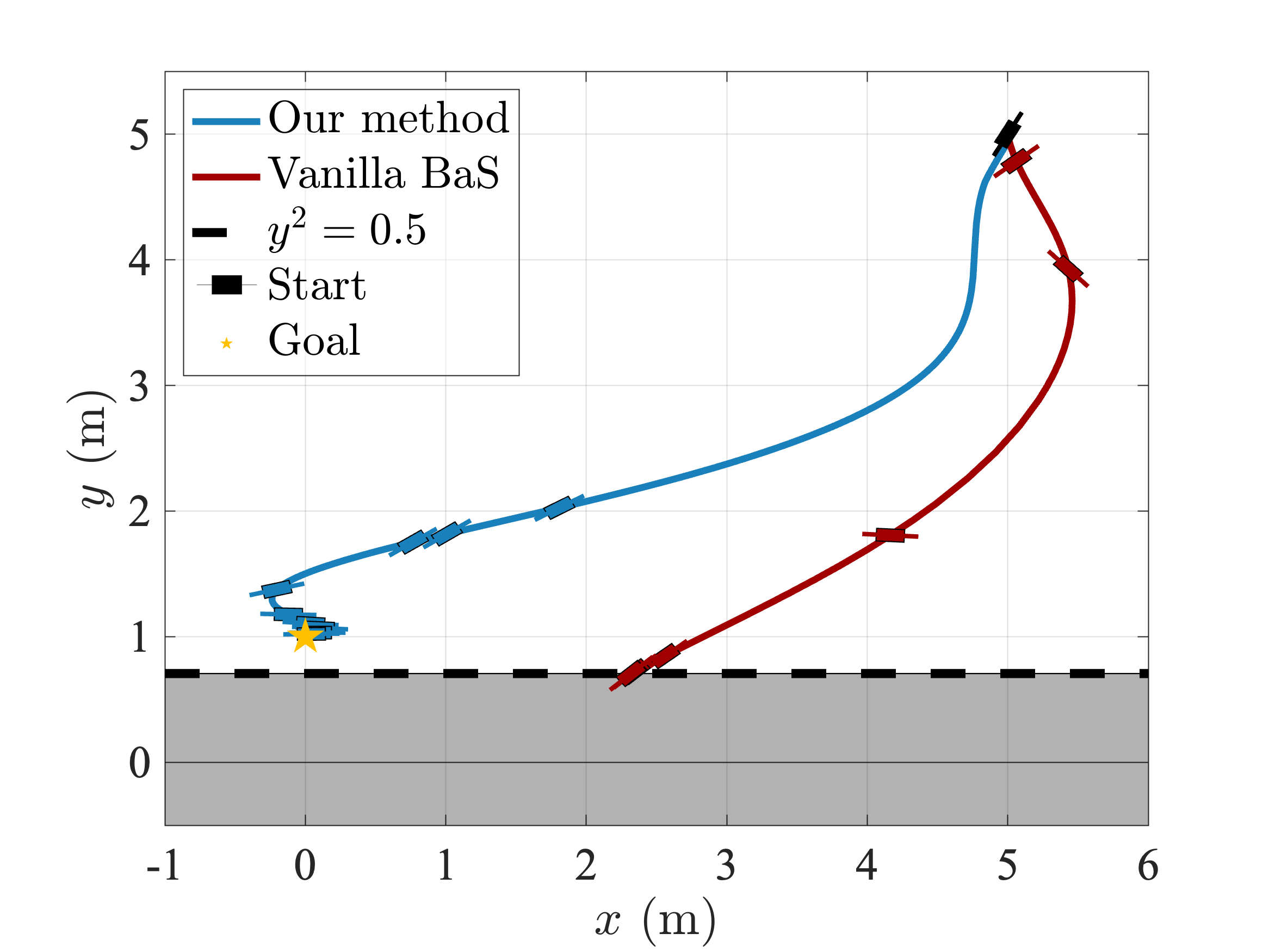}
    \caption{Our method successfully stabilizes the quadrotor at $x=\SI[per-mode = symbol]{0}{\meter}, y=\SI[per-mode = symbol]{1}{\meter}, \psi = \SI[per-mode = symbol]{0}{\rad}$ while BaS without adaptation cannot handle the task because of the uncertain $d_x$ and $d_y$.}
    \label{fig:quadrotor_sim}
\end{figure}
\vspace{-2mm}
\subsection{Inverted pendulum}
The dynamics are given by:
\begin{align}
    \begin{bmatrix}
        \dot{q} \\ 
        \Ddot{q}
    \end{bmatrix}  = 
    \begin{bmatrix}
        \dot{q} \\ 
        \frac{g}{l} \sin{(q)} - \frac{b}{ml^2} \dot{q}
    \end{bmatrix} +
    \begin{bmatrix}
        0 \\ 
        \frac{1}{m}l^2
    \end{bmatrix} u,
\end{align}
where $q$ and $\dot{q}$ are the pendulum’s angular position and velocity, respectively, and $m$, $l$, and $u$ denote its mass, length, and applied torque. The parameters $g$ (gravity) and $b$ (viscous damping) are unknown. The safety constraint is $q \leq \sqrt{\frac{\pi}{4}} \SI[per-mode = symbol]{}{\ \rad}$. As in the quadrotor example, the high relative degree prevents direct application of aCBFs and raCBFs.

For our example, we use the following parameters: $m=\SI[per-mode = symbol]{1}{\kilogram}$, $l=\SI[per-mode = symbol]{2}{\meter}$, $g = \SI[per-mode = symbol]{9.81}{\meter\per\square\second}$, and $b = \SI[per-mode = symbol]{1}{\kilogram\square\meter\per\second}$. In \autoref{fig:pendulum_sim}, we show an example of a stabilization task, the pendulum starts from $q = \frac{\pi}{5} \SI[per-mode = symbol]{}{\ \rad}$ and the goal is to stabilize around $q = \SI[per-mode = symbol]{0}{\rad}$. Again, we see how our method is able to handle the situation while barrier states alone, without the consideration of the unknowns, fail to avoid safety violations.
\begin{figure}[ht]
\vspace{-2mm}
    \centering
    \includegraphics[width=\linewidth]{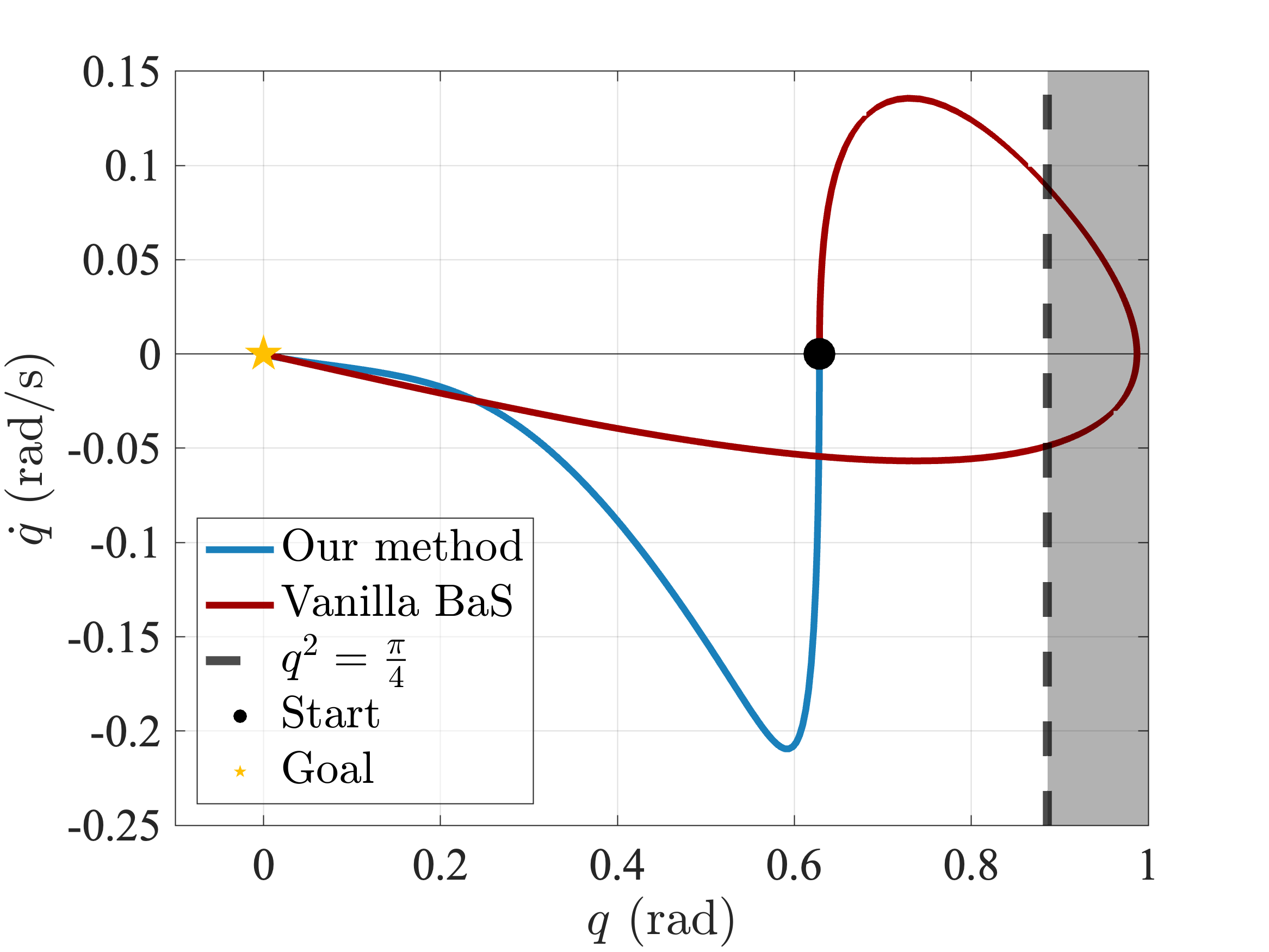}
    \caption{The proposed approach safely stabilizes the pendulum at $q=\SI[per-mode = symbol]{0}{\rad}$. Vanilla BaS cannot handle the situation due to the uncertain $g$ and $b$.}
    \label{fig:pendulum_sim}
    \vspace{-5mm}
\end{figure}

\subsection{Adaptive cruise control}
The system dynamics are given by:
\begin{align}
{\begin{bmatrix} \dot{v} \\ \dot{D} \end{bmatrix}} = {\begin{bmatrix} 0 \\ v_L - v \end{bmatrix} + \begin{bmatrix} \frac{1}{m} \\ 0 \end{bmatrix} u} {- \frac{1}{m} \begin{bmatrix} 1 & v & v^2 \\ 0 & 0 & 0 \end{bmatrix}} {\begin{bmatrix} f_0 \\ f_1 \\ f_2 \end{bmatrix}},
\label{eq:acc_sys}
\end{align}
where $v$ and $D$ are the ego vehicle's velocity and its distance from a leading vehicle moving at speed $v_L$. The mass of the ego vehicle is $m$, and $f_0$, $f_1$, and $f_2$ are unknown parameters related to rolling frictional force. The control input $u$ represents the ego vehicle’s acceleration.

For our example, we use the following parameters: $m = \SI[per-mode = symbol]{1650}{\kilogram}$, $v_L = \SI[per-mode = symbol]{14}{\meter\per\second}$, and the unknowns $f_0 = \SI[per-mode = symbol]{0.1}{\newton}$, $f_1 = \SI[per-mode = symbol]{5}{\newton\second\per\meter}$, and $f_2 =  \SI[per-mode = symbol]{0.25}{\newton\square\second\per\meter}$. In \autoref{fig:acc_sim}, we show an example where the goal for the vehicle is to maintain a speed of $v_{des} = \SI[per-mode = symbol]{24}{\meter\per\second}$ while satisfying the safety constraint $D \geq 1.8v$. We compare our method with raCBF and BaS without adaptation and we see that our method performs better than raCBF (i.e., approaches the desired speed and the boundary of the safe set more closely) while BaS without adaptation fails to satisfy the safety condition.
\begin{figure}[ht]
\vspace{-4mm}
    \centering
    \includegraphics[width=\linewidth]{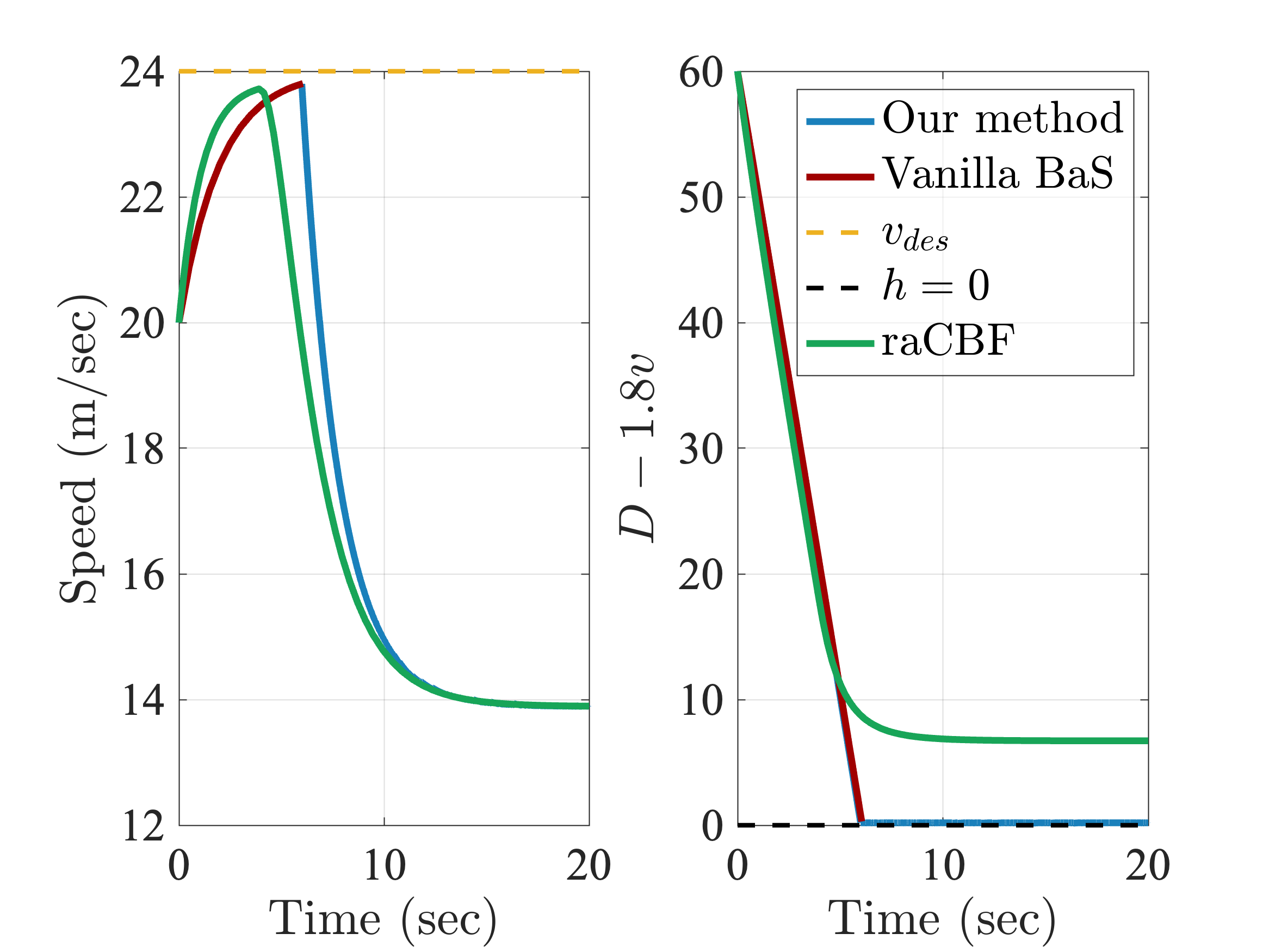}
    \caption{Our method increases the speed until the safe set's boundary set is approached and then it decreases the speed. Notice that it can approach the boundary with a small gain ($\Gamma = I$), while raCBF is more conservative even with a very high gain ($\Gamma = 200I$).}
    \label{fig:acc_sim}
    \vspace{-3mm}
\end{figure}

\section{Conclusions} \label{sec:conclusions}
In conclusion, we introduced a framework that effectively integrates adaptive control techniques with barrier states. By embedding barrier states directly within the control law, we successfully address the challenge of maintaining system safety while achieving desired performance in the presence of uncertainties. The results presented here contribute to advancing the state of the art in adaptive control, safety-critical systems, and multi-objective control frameworks.

Future work will focus on extending this framework to different adaptive control techniques and exploring its application in real-world safety-critical environments, where both adaptability and safety must be maintained.

\printbibliography

\end{document}